\numberwithin{equation}{section}
\newtheorem{definition}{Definition}[section]
\newtheorem{proposition}{Proposition}[section]
\newtheorem{remark}{Remark}[section]
\newtheorem{example}{Example}[section]
\newcommand{\8}{\infty}
\newcommand{\be}{\begin{eqnarray*}}
\newcommand{\ee}{\end{eqnarray*}}
\newcommand{\beq}{\begin{equation}}
\newcommand{\eeq}{\end{equation}}
\newcommand{\beqn}{\begin{equation*}}
\newcommand{\eeqn}{\end{equation*}}
\newcommand{\bs}{\begin{split}}
\newcommand{\es}{\end{split}}
\begin{document}

\title{Banach space formalism of quantum mechanics}% Force line breaks with \\

\author{Zeqian Chen}

\address{State Key Laboratory of Resonances and Atomic and Molecular Physics, Wuhan Institute of Physics and Mathematics, Innovation Academy for Precision Measurement Science and Technology, Chinese Academy of Sciences, 30 West District, Xiao-Hong-Shan, Wuhan 430071, China.}

%\thanks{Z.Chen is partially supported by NSFC grant No.11171338.}
\thanks{Key words: Quantum mechanics; complex Banach space; semi-inner product; spectral operator; quantum state; physical event; physical quantity; Schr\"{o}dinger equation.}

\date{}
\maketitle
\markboth{Zeqian Chen}%
{Quantum theory}

\begin{abstract}
This paper presents a generalization of quantum mechanics from conventional Hilbert space formalism to Banach space one. We construct quantum theory starting with any complex Banach space beyond a complex Hilbert space, through using a basic fact that a complex Banach space always admits a semi-inner product. Precisely, in a complex Banach space $\mathbb{X}$ with a given semi-inner product, a pure state is defined by Lumer \cite{Lumer1961} to be a bounded linear functional on the space of bounded operators determined by a normalized element of $\mathbb{X}$ under the semi-inner product, and then the state space $\mathcal{S} (\mathbb{X})$ of the system is the weakly closed convex set spanned by all pure states. Based on Lumer's notion of the state, we associate a quantum system with a complex Banach space $\mathbb{X}$ equipped with a fixed semi-inner product, and then define a physical event at a quantum state $\omega \in \mathcal{S}(\mathbb{X})$ to be a projection $P$ (bounded operator such that $P^2 =P$) in $\mathbb{X}$ satisfying the positivity condition $0 \le \omega (P) \le 1,$ and a physical quantity at a quantum state $\omega$ to be a spectral operator of scalar type with real spectrum so that the associated spectral projections are all physical events at $\omega.$ The Born formula for measurement of a physical quantity is the natural pairing of operators with linear functionals satisfying the probability conservation law. A time evolution of the system is governed by a one-parameter group of invertible spectral operators determined by a scalar type operator with the real spectrum, which satisfies the Schr\"{o}dinger equation. Our formulation is just a generalization of the Dirac-von Neumann formalism of quantum mechanics to the Banach space setting. We include some examples for illustration.
\end{abstract}

\maketitle

%\tableofcontents

\section{Introduction}\label{Intro}

Since the birth of quantum mechanics, various generalizations on it are developed (cf.\cite{Mac1963}). We refer to \cite{Nie2014, Nie2015, Sch2020, Sch2021} and references therein for some recent works. Those formulations essentially used the algebraic structure of state and observable, as similar to that in the conventional Hilbert space formalism (cf.\cite{Dirac1958}), and seem not to have a generalization to an arbitrary Banach space setting. Here, we shall use the analytic structure of state and observable (cf.\cite{vN1955}) to give a generalization of quantum mechanics from conventional Hilbert space formalism to Banach space one.

Based on the analytic structure of state and observable, the author \cite{Chen2021} presented a mathematical formalism of non-Hermitian quantum mechanics (cf.\cite{BBM1999, BBJ2002, Brody2014, Mosta2010}) by using spectral operators of scalar type with real spectrum in a complex Hilbert space. Since the theory of spectral operators has been studied in an arbitrary Banach space (cf.\cite{DS1971}), this stimulates us to generalize quantum theory to a Banach space setting. This can be done by using the notion of a state defined by Lumer \cite{Lumer1961} in a complex Banach space under a semi-inner product. Since a complex Banach space always admits a semi-inner product, our formulation of quantum mechanics is valid for any complex Banach space.

The paper is organized as follows. In Section \ref{Pre}, we include some definitions and notations on spectral operators and semi-inner products in a complex Banach space. In Section \ref{QevtQob}, we give the definitions of physical event and quantity and study their basic properties. In Section \ref{BSF}, we present a formulation of quantum mechanics in a Banach space setting, and give an example for illustration. Finally, we give a summary in Section \ref{Sum}.

\section{Preliminaries}\label{Pre}

We utilize the standard notions and notations from functional analysis (cf.\cite{DS1957, DS1963, RS1980I, Rudin1991}). We denote by $\mathbb{C}$ the complex field and the unit circle $\mathbb{S} = \{z \in \mathbb{C}: |z| =1 \}.$ Unless specified otherwise, we denote by $\mathbb{X}$ a complex Banach space. By an {\it operator} $T$ in $\mathbb{X}$ we shall mean a linear mapping whose domain $\mathcal{D} (T)$ is a (not necessarily closed) linear subset of $\mathbb{X}$ and whose range $\mathcal{R} (T)$ lies in $\mathbb{X}.$ If $\mathcal{D} (T)$ is dense in $\mathbb{X},$ then $T$ is called a densely defined operator, and if the set $\{(x, Tx): x \in \mathcal{D} (T) \}$ is closed in $\mathbb{X} \times \mathbb{X}$ equipped with the norm $\|(x,y)\| = \|x\| + \|y\|,$ then $T$ is called a closed operator. For a densely defined closed operator $T,$ we denote by $\sigma (T)$ its spectral set and by $\rho (T)$ its resolvent set. We always use $\mathcal{B} (\mathbb{X})$ to denote the space of all bounded operators in $\mathbb{X}.$

\subsection{Spectral operators}\label{Pre:SpecOper}

If an operator $P \in \mathcal{B} (\mathbb{X})$ satisfies $P^2 = P,$ it is called a {\it projection} as in \cite{DS1971,RS1980I,Rudin1991}. For a projection $P,$ its complementary operator $P^\bot = I -P$ is a projection as well. We denote by $\mathcal{P} (\mathbb{X})$ the set of all projections in $\mathbb{X}.$ For two commuting projections $P, Q \in \mathcal{P} (\mathbb{X}),$ the intersection $P\wedge Q$ is defined by
\be
P\wedge Q = P Q
\ee
with the range $P\wedge Q (\mathbb{X}) = P (\mathbb{X}) \cap Q (\mathbb{X}),$ and the union $P \vee Q$ by
\be
P \vee Q = P+Q - P Q,
\ee
with the range $P \vee Q (\mathbb{X}) = P (\mathbb{X}) + Q (\mathbb{X}) = \overline{\mathrm{span}} [P (\mathbb{X}) \cup Q (\mathbb{X})],$ the closed subspace of $\mathbb{X}$ spanned by the sets $P (\mathbb{X})$ and $Q (\mathbb{X}).$ The order $P \le Q$ between two commuting projections $P, Q \in \mathcal{P} (\mathbb{X})$ is defined to be $P (\mathbb{X}) \subset Q (\mathbb{X}).$ A Boolean algebra of projections in $\mathbb{X}$ is a subset of $\mathcal{P} (\mathbb{X})$ which is a Boolean algebra under operations $\wedge, \vee$ and $\le$ together with its zero and unit elements being the operators $0$ and $I$ in $\mathcal{B} (\mathbb{X})$ respectively.

\begin{definition}\label{df:SpecMeasure}{\rm (cf.\cite{Dunf1958, DS1971})}\;
Let $\Sigma$ be a $\sigma$-field of subsets of a non-empty set $\Omega.$ A spectral measure on $\Sigma$ is a map $\mathbf{E}$ from $\Sigma$ into a Boolean algebra of projections in $\mathbb{X}$ satisfying the following conditions:
\begin{enumerate}[$1)$]

\item $\mathbf{E} (\emptyset) =0$ and $\mathbf{E} (\Omega) =I.$

\item For any $A, B \in \Sigma,$ $\mathbf{E} (\Omega \setminus A) = \mathbf{E} (A)^\bot,$
\be
\mathbf{E} (A \cap B) = \mathbf{E} (A) \wedge \mathbf{E} (B), \quad \mathbf{E} (A \cup B) = \mathbf{E} (A) \vee \mathbf{E} (B).
\ee

\item $\mathbf{E} (A)$ is countably additive in $A$ in the strong operator topology, i.e., for every sequence $\{A_n\}$ of mutually disjoint sets in $\Sigma,$
\be
\mathbf{E} (\cup_n A_n) x = \sum_n \mathbf{E} (A_n)x
\ee
holds for any $x \in \mathbb{H},$ where the series of the right hand side converges in $\mathbb{X}$ in the norm topology.

\item $\mathbf{E}$ is regular, i.e., for any $x \in \mathbb{X}$ and $x^* \in \mathbb{X}^*,$ the complex-valued measure $A \mapsto x^* (\mathbf{E} (A)x)$ is regular on $\mathcal{B}_\mathbb{C}.$

\end{enumerate}
\end{definition}

Note that every spectral measure $\mathbf{E}$ on $\Sigma$ is bounded, i.e., $\sup_{A \in \Sigma} \| \mathbf{E} (A) \| < \8.$ In this case, the integral $\int_\Omega f(\omega) \mathbf{E} (d \omega)$ may be defined for every bounded $\Sigma$-measurable (complex-valued) function defined $\mathbf{E}$-almost everywhere on $\Omega.$ Recall that a function $f$ is defined $\mathbf{E}$-almost everywhere on $\Omega,$ if there exists $\Omega_0 \in \Sigma$ such that $\mathbf{E} (\Omega_0) = I$ and $f$ is well defined for every $\omega \in \Omega_0.$ It was shown (cf. \cite[X.1]{DS1963}) that this integral is a bounded mapping from the space $\mathcal{B}(\Omega, \Sigma)$ of bounded $\Sigma$-measurable functions in $\Omega$ with the norm $\|f\| = \sup_{\omega \in \Omega} |f(\omega)|$ into the Banach algebra $\mathcal{B} (\mathbb{X}),$ that is, for any $\alpha, \beta \in \mathbb{C}$ and for $f, g \in \mathcal{B}(\Omega, \Sigma),$
\be\begin{split}
\int_\Omega [\alpha f (\omega) + \beta g (\omega)] \mathbf{E} (d \omega) & = \alpha \int_\Omega f (\omega) \mathbf{E} (d \omega) + \beta \int_\Omega g (\omega) \mathbf{E} (d \omega),\\
\int_\Omega f (\omega) g (\omega) \mathbf{E} (d \omega) & = \int_\Omega f (\omega) \mathbf{E} (d \omega) \int_\Omega g (\omega) \mathbf{E} (d \omega),\\
\Big \| \int_\Omega f (\omega) \mathbf{E} (d \omega) \Big \| & \le C_\mathbf{E} \sup_{\omega \in \Omega} |f(\omega)|,
\end{split}\ee
where $C_\mathbf{E}$ is a positive constant depending only upon the spectral measure $\mathbf{E}.$

We shall focus on the spectral measures on the $\sigma$-field $\mathcal{B}_\mathbb{C}$ of Borel sets in the complex plane $\mathbb{C}.$
\begin{comment}
\begin{definition}\label{df:SpecOp}{\rm (cf. \cite[Definition XVIII.2.1]{DS1971})}\;
A densely defined closed operator $T$ in $\mathbb{X}$ with the domain $\mathcal{D} (T)$ is called a spectral operator, if there is a spectral measure $\mathbf{E}$ on $\mathcal{B}_\mathbb{C}$ such that
\begin{enumerate}[$1)$]

\item $\mathbf{E}$ is regular, i.e., for any $x \in \mathbb{X}$ and $x^* \in \mathbb{X}^*,$ the complex-valued measure $A \mapsto x^* (\mathbf{E} (A)x)$ is regular on $\mathcal{B}_\mathbb{C},$

\item for any bounded set $A \in \mathcal{B}_\mathbb{C},$ $\mathbf{E} (A) \mathbb{X} \subset \mathcal{D} (T),$

\item for any $B \in \mathcal{B}_\mathbb{C},$ $\mathbf{E} (B) \mathcal{D} (T) \subset \mathcal{D} (T)$ and
\be
T \mathbf{E} (B) x = \mathbf{E} (B) T x
\ee
for all $x \in \mathcal{D} (T),$

\item for any $B \in \mathcal{B}_\mathbb{C},$ the spectral set $\sigma (T|\mathbf{E} (B) (\mathbb{X}))$ of the restriction $T|\mathbf{E} (B) (\mathbb{X})$ of $T$ to $\mathbf{E} (B) (\mathbb{X})$ with domain $\mathcal{D} (T|\mathbf{E} (B) (\mathbb{X})) = \mathcal{D} (T) \cap \mathbf{E} (B) (\mathbb{X})$ is contained in the closure $\bar{B}$ of $B,$ i.e.,
\be
\sigma (T|\mathbf{E} (B) (\mathbb{X})) \subset \bar{B}.
\ee
\end{enumerate}
The spectral measure $\mathbf{E}$ is called the {\it spectral resolution} or {\it resolution of the identity} for $T.$
\end{definition}
\end{comment}
\begin{definition}\label{df:SpecOpScalar}{\rm (cf.\cite[Definition XVIII.2.12]{DS1971})}\;
A densely defined closed operator $T$ in $\mathbb{X}$ with the domain $\mathcal{D} (T)$ is of scalar type or called a scalar type operator, if there is a spectral measure $\mathbf{E}$ on $\mathcal{B}_\mathbb{C}$ such that
\be
\mathcal{D} (T) = \{ x \in \mathbb{X}: \lim_n T_n x\;\text{exists}\}
\ee
and
\be
T x = \lim_n  T_n x,\quad \forall x \in \mathcal{D} (T),
\ee
where
\be
T_n = \int_{\{z \in\mathbb{C}: |z| \le n\} } z \mathbf{E} (d z).
\ee
The spectral measure $\mathbf{E}$ is said to be the spectral resolution for $T.$
\end{definition}

\begin{remark}\label{rk:ScalarTypeOp}\rm
It is shown in \cite[Lemma XVIII.2.13]{DS1971} that a scalar type operator $T$ in the sense of Definition \ref{df:SpecOpScalar} is a spectral operator (cf.\cite[Definition XVIII.2.1]{DS1971}) and the spectral resolution of $T$ is unique. Thus, a scalar type operator is also called a {\it spectral operator of scalar type}. %Note that the spectral measure $\mathbf{E}$ is uniquely determined by $T,$ i.e., the spectral resolution of a densely defined closed operator in $\mathbb{X}$ is unique whenever it exists (cf. \cite[Theorem XVIII.2.5]{DS1971}).
\end{remark}

\begin{definition}\label{df:FunctCalculusparaHop}{\rm (cf.\cite[Definition XVIII.2.10]{DS1971})}\;
Let $T$ be a scalar type operator of with the spectral resolution $\mathbf{E}$ on $\mathcal{B}_\mathbb{C}.$  For any $\mathcal{B}_\mathbb{C}$-measurable function $f,$ we define $f (T)$ by
\be
f(T) x = \lim_n T(f_n) x,\quad \forall x \in \mathcal{D} (f(T)),
\ee
where
\be\begin{split}
\mathcal{D} (f(T)) =& \{x \in \mathbb{H}: \lim_n T(f_n) x\;\text{exists}\},\\
T(f_n) = & \int_\mathbb{C} f_n (z) \mathbf{E} (d z),
\end{split}\ee
and
\be
f_n (z) = \left \{\begin{split} & f(z),\quad |f(z)| \le n,\\
& 0, \quad |f(z)| >0.
\end{split}\right.
\ee
\end{definition}

\begin{remark}\label{rk:FunctCalculusparaHop}\rm
It is shown in \cite[Theorem XVIII.2.17]{DS1971} that $f(T)$ in the above definition is a spectral operator of scalar type with the spectral resolution $\mathbf{E}_f (E) = \mathbf{E} (f^{-1} (E))$ for any $E \in \mathcal{B}_\mathbb{C}.$
\end{remark}

Thus, we have the well-defined functional calculus for scalar type operators, which plays a role in the dynamics of quantum mechanics in the Banach space setting. This is the reason why we use a scalar type operator as candidate for physical quantity in the Banach setting of quantum theory.

\subsection{Semi-inner product}\label{SIP}

Let $\mathbb{X}$ be a complex Banach space with the norm $\|\cdot\|,$ denoted by $(\mathbb{X}, \|\cdot\|).$ According to \cite{Lumer1961}, a semi-inner product on $(\mathbb{X}, \|\cdot\|)$ is a mapping $[\cdot,\cdot]: \mathbb{X} \times \mathbb{X} \mapsto \mathbb{C}$ satisfying the following properties:
\begin{enumerate}[$(1)$]

\item For any $x, y, z \in \mathbb{X}$ and for $a,b \in \mathbb{C},$
\be
[ax+by, z] = a [x,z]+ b [y,z].
\ee

\item $[x,x] =\|x \|^2$ for $x \in \mathbb{X}.$

\item For any $x,y \in \mathbb{X},$
\beq\label{eq:SIP-SchIeq}
|[x,y]|^2 \le [x,x] [y,y].
\eeq

\end{enumerate}

As shown in \cite{Lumer1961}, each complex Banach space has at least a semi-inner product. Indeed, for every $x \in \mathbb{X}$ there exists by the Hahn-Banach theorem at least one continuously linear functional $f_x \in \mathbb{X}^*$ with the norm $\|f_x\| = \|x\|$ such that $f_x (x) = \|x\|^2.$ For each $y \in \mathbb{X}$ we chose such a functional $f_y$ and define $[x,y] = f_y (x)$ for any $x \in \mathbb{X},$ and then get a semi-inner product $[\cdot,\cdot]$ on $(\mathbb{X}, \|\cdot\|).$ We remark that a Hilbert space $\mathbb{H}$ has a unique semi-inner product $[\cdot,\cdot]$ which is the Hilbert space inner product itself (cf.\cite[Theorem 3]{Lumer1961}).

\begin{example}\label{ex:2dimSIP}\rm
Consider a two-dimensional complex Banach space $(\mathbb{C}^2, \|\cdot\|_p)$ ($1 \le p < \8$), whose norm is defined by
\be
\|x\|_p = (|a|^p + |b|^p)^\frac{1}{p},\quad \forall x = a |0\rangle + b |1\rangle \in \mathbb{C}^2,
\ee
which is a Hilbert space only if $p=2,$ whereafter
\be
|0\rangle =  \left (
\begin{matrix}
1 \\
0
\end{matrix} \right ),
\quad
|1\rangle = \left (
\begin{matrix}
0 \\
1
\end{matrix} \right ).
\ee
We define $[\cdot, \cdot]_p: \mathbb{C}^2 \times \mathbb{C}^2 \mapsto \mathbb{C}$ by
\beq\label{eq:SIPqb1}
[x,y]_p = \|y\|^{2-p}_p (a \; \mathrm{sign} (\bar{c}) |c|^{p-1}+ b \; \mathrm{sign}(\bar{d}) |d|^{p-1}),
\eeq
for $x= a |0\rangle + b |1\rangle, y = c |0\rangle + d |1\rangle \in \mathbb{C}^2,$ where $\mathrm{sign}(u) = \frac{u}{|u|}$ if $u \not=0$ or $= 0$ if $u=0.$ Then $[\cdot, \cdot]_p$ is a semi-inner product on $(\mathbb{C}^2, \|\cdot\|_p).$
\end{example}

\section{Physical event and quantity}\label{QevtQob}

As noted above, every Banach space admits a semi-inner product. The notion of a state has been introduced by Lumer \cite{Lumer1961} in any complex Banach space under a semi-inner product.

\begin{definition}\label{df:State}{\rm (cf.\cite[Definition 8 and 9]{Lumer1961})}
Let $(\mathbb{X}, \|\cdot\|)$ be a complex Banach space with a fixed semi-inner product $[\cdot,\cdot].$ A bounded linear functional $\omega$ on $\mathcal{B}(\mathbb{X})$ is called a state if $\|\omega\|= \omega (I);$ if in addition $\|\omega\| =1,$ we shall call $\omega$ a normalized state. We shall denote by $\mathcal{S} (\mathbb{X})$ the set of all normalized states.

A state $\omega$ of the form $\omega (T) = [Tx, x]$ with a fixed $x \in \mathbb{X}$ is said to be a point state. Such a state is denoted by $\omega_x,$ sometimes written by $|x \rangle$ as well, and if $\|x\|=1$ we shall call $|x\rangle$ a pure state. We shall denote by $\mathcal{S}_0 (\mathbb{X})$ the set of all pure states.
\end{definition}

\begin{remark}\label{rk:state}\rm
As shown in \cite[Theorem 11]{Lumer1961}, $\mathcal{S} (\mathbb{X})$ is the weakly closed convex hull of $\mathcal{S}_0 (\mathbb{X}).$
\end{remark}

In what follows, based on Lumer's notion of the state, we introduce the notions of physical event and physical quantity on a complex Banach space equipped with a fixed semi-inner product.

\begin{definition}\label{df:Event}
Let $(\mathbb{X}, \|\cdot\|)$ be a complex Banach space with a fixed semi-inner product $[\cdot,\cdot].$ A projection $P$ on $(\mathbb{X}, \|\cdot\|)$ is called a physical event at a state $\omega \in \mathcal{S} (\mathbb{X}),$ if
\beq\label{eq:EventCond}
0 \le \omega (P) \le 1.
\eeq
\end{definition}

\begin{remark}\label{rk:event}\rm
In a complex Hilbert space $\mathbb{H}$ with the inner product $\langle \cdot, \cdot\rangle,$ an orthogonal projection is a physical event at any state. However, in a Banach space $\mathbb{X}$ with a semi-inner product, a projection $P$ being a physical event at a certain state $\omega$ may not be a physical event at another state $\omega'$ (see Example \ref{ex:PauliM} below), that is, a physical event is a state-dependent notion. This explains the reason why we define a physical event in a complex Banach space as above.
\end{remark}

\begin{definition}\label{df:HermitianOp}
Let $(\mathbb{X}, \|\cdot\|)$ be a complex Banach space with a fixed semi-inner product $[\cdot,\cdot].$ A scalar type operator $T$ with real spectrum on $(\mathbb{X}, \|\cdot\|)$ is called a physical quantity at a state $\omega,$ if for any Borel subset $A$ of $\mathbb{R},$ the associated spectral projection $\mathbf{E} (A)$ is a physical event at $\omega.$
\end{definition}

\begin{remark}\label{rk:HermitianOp}\rm
As similar to a physical event, a physical quantity is a state-dependent notion in the Banach space formalism of quantum mechanics.
\end{remark}

\begin{proposition}\label{prop:EventHermitian}\rm
Let $(\mathbb{X}, \|\cdot\|)$ be a complex Banach space with a fixed semi-inner product $[\cdot,\cdot].$
\begin{enumerate}[$1)$]

\item If a projection $P$ is a physical event at a state $\omega,$ then $P^\perp = 1-P$ is also a physical event at $\omega.$

\item If a scalar type operator $T$ with real spectrum has a normalized eigenvector $x$ with an isolated eigenvalue, then $T$ is a physical quantity at the quantum state $|x\rangle.$

\item If a scalar type operator $T$ with real spectrum is a physical quantity at a point state $\omega_x$ with $x \in \mathcal{D}(T),$ then the limit
\beq\label{eq:ExpHermitianOpPureState}
\omega_x (T) = \lim_n \omega_x (T_n)
\eeq
exists and is a real number, where $T_n = \int_{[-n, n]} \lambda \mathbf{E} (d \lambda)$ for $n \ge 1$ and $\mathbf{E}$ is the spectral measure associated with $T.$

\end{enumerate}
\end{proposition}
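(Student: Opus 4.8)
The plan is to dispose of assertion 1) immediately, to reduce 2) to the single fact that an eigenvector is carried by the spectral projection at its eigenvalue, and to derive 3) from membership of $x$ in $\mathcal{D}(T)$ (which gives existence of the limit) together with the positivity forced by the physical‑quantity hypothesis (which gives reality). For 1): a normalized state $\omega \in \mathcal{S}(\mathbb{X})$ has $\omega(I) = \|\omega\| = 1$, so by linearity $\omega(P^{\perp}) = \omega(I) - \omega(P) = 1 - \omega(P)$; since $0 \le \omega(P) \le 1$ this gives $0 \le \omega(P^{\perp}) \le 1$, and $P^{\perp}$ is again a projection, hence a physical event at $\omega$.

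For 2): let $\mathbf{E}$ be the spectral resolution of $T$ (unique, by Remark \ref{rk:ScalarTypeOp}), $\lambda_0$ the isolated eigenvalue, $x$ the normalized eigenvector, $Tx = \lambda_0 x$. The crux is the identity $\mathbf{E}(\{\lambda_0\})x = x$. I would argue as follows: for a scalar type operator $\mathbf{E}$ is concentrated on $\sigma(T)$, so the complementary projection $P_1 := I - \mathbf{E}(\{\lambda_0\}) = \mathbf{E}(\sigma(T)\setminus\{\lambda_0\})$ satisfies, by the structure of spectral operators (\cite{DS1971}; cf.\ Remark \ref{rk:ScalarTypeOp}), that it commutes with $T$, that $P_1\mathcal{D}(T)\subseteq\mathcal{D}(T)$, and that $\sigma(T|P_1\mathbb{X})$ lies in the closed set $\sigma(T)\setminus\{\lambda_0\}$, which excludes $\lambda_0$; hence $T - \lambda_0 I$ is injective on $P_1\mathbb{X}$. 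As an eigenvector lies in $\mathcal{D}(T)$, we get $(T-\lambda_0 I)P_1 x = P_1(T-\lambda_0 I)x = 0$, forcing $P_1 x = 0$, i.e.\ $\mathbf{E}(\{\lambda_0\})x = x$. (Equivalently, $\mathbf{E}(\{\lambda_0\})$ is the Riesz projection $\frac{1}{2\pi i}\oint_{\gamma}(zI-T)^{-1}\,dz$ about $\lambda_0$, and $(zI-T)^{-1}x = (z-\lambda_0)^{-1}x$ gives $\mathbf{E}(\{\lambda_0\})x = x$ by Cauchy's integral formula.) It then follows that for every Borel set $A \subseteq \mathbb{R}$, $\mathbf{E}(A)x = \mathbf{E}(A)\mathbf{E}(\{\lambda_0\})x = \mathbf{E}(A\cap\{\lambda_0\})x$, equal to $x$ when $\lambda_0 \in A$ and to $0$ otherwise; hence $|x\rangle(\mathbf{E}(A)) = [\mathbf{E}(A)x,x] \in \{0,\|x\|^2\} = \{0,1\}$, so $\mathbf{E}(A)$ is a physical event at $|x\rangle$ and $T$ is a physical quantity at $|x\rangle$.

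For 3): existence of the limit is immediate from $x \in \mathcal{D}(T)$ — by Definition \ref{df:SpecOpScalar}, $T_n x \to Tx$ in norm, where $T_n = \int_{[-n,n]}\lambda\,\mathbf{E}(d\lambda)$ coincides with $\int_{\{|z|\le n\}}z\,\mathbf{E}(dz)$ since $\mathbf{E}$ is concentrated on $\sigma(T)\subseteq\mathbb{R}$ — and $y\mapsto[y,x]$ is a bounded linear functional (linear by property $(1)$ of a semi‑inner product, bounded by \eqref{eq:SIP-SchIeq}); hence $\omega_x(T_n) = [T_n x,x] \to [Tx,x]$, so $\omega_x(T) = \lim_n \omega_x(T_n)$ exists and equals $[Tx,x]$. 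For reality, set $\mu_x(A) = \omega_x(\mathbf{E}(A)) = [\mathbf{E}(A)x,x]$ on Borel $A\subseteq\mathbb{R}$; strong countable additivity of $\mathbf{E}$ and continuity of $[\cdot,x]$ make $\mu_x$ a finite complex measure, and the hypothesis that $T$ is a physical quantity at $\omega_x$ says exactly that $0\le\mu_x(A)\le1$ for all $A$, i.e.\ $\mu_x\ge0$. A routine extension argument (indicators, then simple functions, then uniform limits, using the norm bound $\|\int f\,\mathbf{E}(d\lambda)\|\le C_{\mathbf{E}}\|f\|_{\infty}$ recalled in Section \ref{Pre}) gives $[(\int_{\mathbb{R}} f\,\mathbf{E}(d\lambda))x,x] = \int_{\mathbb{R}} f\,d\mu_x$ for every bounded Borel $f$; taking $f(\lambda)=\lambda$ on $[-n,n]$ and $0$ elsewhere yields $\omega_x(T_n) = \int_{[-n,n]}\lambda\,\mu_x(d\lambda)\in\mathbb{R}$, so $\omega_x(T)$ is real as a limit of reals.

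The step I expect to be the main obstacle is 2), specifically the identity $\mathbf{E}(\{\lambda_0\})x = x$: it forces an appeal to the structure theory of scalar type operators — that the spectral resolution is concentrated on the spectrum, and that the restriction of $T$ to a complementary spectral subspace has spectrum avoiding $\lambda_0$. Assertions 1) and 3) are routine once one records that $[\cdot,x]$ is a bounded linear functional and that the physical‑quantity hypothesis is nothing but positivity of the scalar spectral measure $\mu_x$.
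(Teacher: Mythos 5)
Your proposal is correct, but there is essentially nothing in the paper to compare it against: the paper's entire proof reads ``Statements 1) and 2) follow from Definitions \ref{df:Event} and \ref{df:HermitianOp}, while 3) from Definition \ref{df:SpecOpScalar}.'' Your write-up supplies exactly the content that this one-liner suppresses. Part 1) is indeed immediate once one notes $\omega(I)=\|\omega\|=1$ for $\omega\in\mathcal{S}(\mathbb{X})$. Part 2) is the only place where a genuine argument is needed, and you have correctly identified it: the identity $\mathbf{E}(\{\lambda_0\})x=x$ does not ``follow from the definition'' of physical quantity but requires the structure theory of spectral operators (the spectral resolution is supported on $\sigma(T)$, the restriction of $T$ to the complementary spectral subspace has spectrum avoiding the isolated point $\lambda_0$, hence $T-\lambda_0 I$ is injective there and kills $P_1x$); your alternative via the Riesz projection and Cauchy's formula is equally valid. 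Once $\mathbf{E}(A)x\in\{0,x\}$ for every Borel $A$, the values $[\mathbf{E}(A)x,x]\in\{0,1\}$ give the positivity condition. Part 3) likewise needs the two observations you make --- continuity of $y\mapsto[y,x]$ (from the semi-inner-product Cauchy--Schwarz inequality) for existence of the limit, and positivity of the scalar measure $A\mapsto[\mathbf{E}(A)x,x]$ (which is precisely the physical-quantity hypothesis) for reality of $\omega_x(T_n)=\int_{[-n,n]}\lambda\,\mu_x(d\lambda)$. In short: no gap in your argument; the gap, if any, is in the paper's.
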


\begin{proof}
Statements $1)$ and $2)$ follow from Definitions \ref{df:Event} and \ref{df:HermitianOp}, while $3)$ from Definition \ref{df:SpecOpScalar}.
\end{proof}

\begin{remark}\label{prop:Basic}\rm
For a physical quantity $T$ at $\omega_x,$ if $x \in \mathcal{D}(T)$ and $\|x\|=1$ then $\omega_x (T)$ is the expectation of $T$ at the pure state $\omega_x$ or $|x\rangle,$ denoted by $\omega_x (T) = \langle x|T| x \rangle$ or $\langle T\rangle_x.$
\end{remark}

\begin{example}\label{ex:PauliM}\rm
As shown in Example \ref{ex:2dimSIP}, for $1 \le p < \8,$ the mapping $[\cdot, \cdot]_p: \mathbb{C}^2 \times \mathbb{C}^2 \mapsto \mathbb{C}$ defined by
\be\label{eq:SIPqb1}
[x,y]_p = \|y\|^{2-p}_p (a \; \mathrm{sign} (\bar{c}) |c|^{p-1}+ b \; \mathrm{sign}(\bar{d}) |d|^{p-1}),
\ee
for $x= a |0\rangle + b |1\rangle, y = c |0\rangle + d |1\rangle \in \mathbb{C}^2,$ is a semi-inner product on the two-dimensional complex Banach space $(\mathbb{C}^2, \|\cdot\|_p)$ with the norm $\|x\|_p = (|a|^p + |b|^p)^\frac{1}{p}.$

Consider the usual Pauli matrixes:
\beq\label{eq:PauliM}
\sigma_1  = \left (
\begin{matrix}
0 & 1 \\
1 & 0
\end{matrix} \right ),\;
\sigma_2  = \left (
\begin{matrix}
0 & - \mathrm{i} \\
\mathrm{i} & 0
\end{matrix} \right ),\;
\sigma_3  = \left (
\begin{matrix}
1 & 0 \\
0 & -1
\end{matrix} \right ).
\eeq
\begin{enumerate}[$1)$]

\item For $\sigma_1$ we have
\be
\sigma_1 = E^{\sigma_1}_x - E^{\sigma_1}_y,
\ee
where
\be\begin{split}
E^{\sigma_1}_x (z) =& \frac{1}{2} (u+v) (|0\rangle + |1\rangle),\\
E^{\sigma_1}_y (z) =&  \frac{1}{2} (u-v) (|0\rangle - |1\rangle),
\end{split}\ee
for any $z = u |0\rangle + v |1\rangle \in \mathbb{C}^2.$ Then
\be\begin{split}
[E^{\sigma_1}_x (z), z]_p =& \frac{1}{2} ( 1+ v \bar{u} |u|^{p-2} + u \bar{v} |v|^{p-2}),\\
[E^{\sigma_1}_y (z), z]_p =& \frac{1}{2} ( 1- v \bar{u} |u|^{p-2} - u \bar{v} |v|^{p-2}),
\end{split}\ee
for any unit vector $z = u |0\rangle + v |1\rangle \in \mathbb{C}^2.$ Hence, $\sigma_1$ is a physical quantity at a pure state $|z\rangle = u |0\rangle + v |1\rangle$ if and only if $u=0$ or $v=0$ or
\be
1 \pm \big ( \frac{v}{u} |u|^p + \frac{u}{v} |v|^p \big ) \ge 0,
\ee
where $|u|^p + |v|^p =1.$

\item For $\sigma_2$ we have
\be
\sigma_2 = E^{\sigma_2}_x - E^{\sigma_2}_y,
\ee
where
\be\begin{split}
E^{\sigma_2}_x (z) =& \frac{1}{2} (u- \mathrm{i} v) (|0\rangle + \mathrm{i}|1\rangle),\\
E^{\sigma_2}_y (z) =&  \frac{1}{2} (u+ \mathrm{i} v) (|0\rangle -\mathrm{i} |1\rangle),
\end{split}\ee
for any $z = u |0\rangle + v |1\rangle \in \mathbb{C}^2.$ Then
\be\begin{split}
[E^{\sigma_2}_x (z), z]_p =& \frac{1}{2} ( 1- \mathrm{i}v \bar{u} |u|^{p-2} + \mathrm{i}u \bar{v} |v|^{p-2} ),\\
[E^{\sigma_2}_y (z), z]_p =& \frac{1}{2} ( 1 + \mathrm{i}v \bar{u} |u|^{p-2} - \mathrm{i}u \bar{v} |v|^{p-2} ),
\end{split}\ee
for any unit vector $z = u |0\rangle + v |1\rangle \in \mathbb{C}^2.$ Hence, $\sigma_2$ is a physical quantity at a pure state $|z\rangle = u |0\rangle + v |1\rangle$ if and only if $u=0$ or $v=0$ or
\be
1 \pm \mathrm{i} \big ( \frac{v}{u} |u|^p - \frac{u}{v} |v|^p \big ) \ge 0,
\ee
where $|u|^p + |v|^p =1.$

\item For $\sigma_3$ we have
\be
\sigma_3 = E^{\sigma_3}_x - E^{\sigma_3}_y,
\ee
where $E^{\sigma_3}_x (z) = u |0\rangle , E^{\sigma_3}_y (z) = v |1\rangle$ for any $z = u |0\rangle + v |1\rangle \in \mathbb{C}^2.$ Then
\be
[E^{\sigma_3}_x(z), z]_p = \|z\|^{2-p}_p |u|^p \ge 0,\; [E^{\sigma_3}_y(z), z]_p = \|z\|^{2-p}_p |v|^p\ge 0.
\ee
Hence, $\sigma_3$ is a physical quantity at any pure state $|z\rangle.$
\end{enumerate}

In conclusion, we find that $\sigma_1$ is a physical quantity at some pure states but not so at some others, as well as $\sigma_2$ in the two-dimensional complex Banach space $(\mathbb{C}^2, \|\cdot\|_p)$ for $1 \le p\not= 2<\8;$ however, $\sigma_3$ is always a physical quantity at any state.
\end{example}

\section{Banach space formalism}\label{BSF}

%\subsection{Mathematical axioms}\label{Axiom}

Following the Dirac-von Neumann formalism of quantum mechanics \cite{Dirac1958, vN1955}, we present a Banach space formalism of quantum mechanics as follows.

\begin{definition}\label{df:MathFNHQM}
The Banach space formalism of quantum mechanics is one that a quantum system is associated with a complex Banach space $\mathbb{X}$ with a fixed semi-inner product $[\cdot,\cdot],$ satisfying the following four postulates:
\begin{enumerate}[$(P_1)$]

\item {\bf The state postulate}\; The system at any given time is described by a pure state $|x\rangle$ determined by a unit vector $x$ in $\mathbb{X}.$

\item {\bf The observable postulate}\; An observable for the system is a physical quantity $A$ at a certain pure state $|x\rangle.$

\item {\bf The measurement postulate}\; For a physical quantum $A$ at a pure state $|x\rangle,$ if $x \in \mathcal{D} (A)$ then the expectation of $A$ at $|x\rangle$ is given by
\beq\label{eq:BornRule}
\langle A \rangle_x = \omega_x (A),
\eeq
where $\omega_x (A)$ is defined in Proposition \ref{prop:EventHermitian}. In particular, if $A$ has a discrete spetrum $\{\lambda_n\}_{n \ge 1},$ whose eigenstates $\{e_n\}_{n \ge 1}$ is a unconditional basis in $\mathbb{X},$ then the expectation of $A$ at $|x\rangle$ with $x \in \mathcal{D} (A),$ where $x = \sum_{n \ge 1} e^*_n (x) e_n$ and $e^*_n$'s are the dual basis vectors of $e_n$'s such that $e^*_n(e_m) = \delta_{n m},$ is given by
\beq\label{eq:BornRuleV}
\langle A \rangle_x = \sum^\8_{n = 1} \lambda_n e^*_n (x) [e_n, x].
\eeq
In this case, $|x\rangle$ will be changed to the state $|e_n \rangle$ with probability
\beq\label{eq:TransProba}
p(x| e_n) = e^*_n (x) [e_n, x]
\eeq
for each $n \ge 1.$

\item {\bf The evolution postulate}\; Associated with a scalar type operator $H$ with real spectrum (the energy operator of the system), the time evolution operators $U(t) = e^{-\mathrm{i}t H}$ constitute an one-parameter group of invertible scalar type operators such that  $|x(t)\rangle = U(t) |x\rangle$ satisfies the Schr\"{o}dinger equation
\beq\label{eq:StateSchrEqu}
\mathrm{i} \frac{d | x (t)\rangle }{d t} = H |x(t)\rangle,
\eeq
with $|x(0)\rangle =  |x\rangle.$

%\item {\bf The composite-systems postulate}\; The Hilbert space associated with a composite non-Hermitian quantum system is the Hilbert space tensor product of the Hilbert spaces of its components. If systems numbered $1$ through $n$ are prepared in states $\psi_k,$ $k=1,\ldots, n,$ then the joint state of the composite total system is the tensor product $\psi_1 \otimes \cdots \otimes \psi_n.$

\end{enumerate}
\end{definition}

\begin{remark}\label{rk:NHQM}\rm
\begin{enumerate}[$1)$]

\item In case $\mathbb{X}$ is a Hilbert space, these postulates coincide with the Dirac-von Neumann formalism of quantum mechanics, since the semi-inner product is the Hilbert space inner product.

\item We do not include the composite-systems postulate in the formalism. This needs Banach space tensor product theory (cf. \cite{DU1977}), and will discuss it elsewhere.

\item Since the spectral measure $\mathbf{E}$ of a scalar type operator with real spectrum is countably additive and $\mathbf{E} (\mathbb{R}) = I,$ the measurement postulate $(P_3)$ satisfies the probability conservation law. In particular, by \eqref{eq:TransProba} one has
\be
\sum_n p(x| e_n) = \sum_n e^*_n (x) [e_n, x] = [x,x] =1,
\ee
this concludes the probability conservation.

\item As in the case of non-Hermitian quantum mechanics (cf.\cite{Brody2014}), the evolution of state in \eqref{eq:StateSchrEqu} does not in general preserve the Banach space norm of state vectors.

\end{enumerate}

\end{remark}

The following example shows Banach space formalism of quantum mechanics for a qubit.

\begin{example}\rm
Consider a quantum system associated with the two-dimensional complex Banach space $(\mathbb{C}^2, \|\cdot\|_p)$ ($1 \le p < \8$). We shall construct quantum theory in $(\mathbb{C}^2, \|\cdot\|_p).$ As in Example \ref{ex:2dimSIP}, recall that $[\cdot, \cdot]_p: \mathbb{C}^2 \times \mathbb{C}^2 \mapsto \mathbb{C}$ defined by
\be\label{eq:SIPqb1}
[x,y]_p = \|y\|^{2-p}_p (a \; \mathrm{sign} (\bar{c}) |c|^{p-1}+ b \; \mathrm{sign}(\bar{d}) |d|^{p-1}),
\ee
for $x= a |0\rangle + b |1\rangle, y = c |0\rangle + d |1\rangle \in \mathbb{C}^2,$ is a semi-inner product on $(\mathbb{C}^2, \|\cdot\|_p).$

For two unit vectors $x= a |0\rangle + b |1\rangle, y = c |0\rangle + d |1\rangle \in \mathbb{C}^2$ such that $a d - c b \not=0,$ we can define two projections $E_x$ and $E_y$ by
\be\begin{split}
E_x (z) = & \frac{1}{a d - c b}( d u - c v) x,\\
E_y (z) = & \frac{1}{a d - c b}(- b u + a v) y,
\end{split}\ee
for all $z = u |0\rangle + v |1\rangle \in \mathbb{C}^2.$ Then,
\be
E_x + E_y =I.
\ee
Therefore, a scalar type operator $H$  of the form
\be
H = \lambda_1 E_x + \lambda_2 E_y
\ee
with $\lambda_1, \lambda_2 \in \mathbb{R},$ is a physical quantity at a pure state $|z\rangle$ if and only if
\be
[E_x(z), z]_p \ge 0, [E_y(z), z]_p \ge 0.
\ee
(Note that $H$ is always a physical quantity at both $|x\rangle$ and $|y\rangle.$) In this case, the expectation of $H$ at $|z\rangle$ is given by
\be
\langle H\rangle_z = \lambda_1 [E_x z,z]_p +  \lambda_2 [E_y z,z]_p,
\ee
where $[E_x z,z]_p$ (resp. $[E_y z,z]_p$) is the probability obtaining $\lambda_1$ (resp. $\lambda_2$) under measurement at $|z\rangle.$

For an energy operator $H$ of the form $H = \lambda_1 E_x + \lambda_2 E_y$ with real spectrum, the time evolution $U(t) = e^{- \mathrm{i} t H} = e^{-\mathrm{i}t \lambda_1} E_x + e^{-\mathrm{i} t\lambda_2} E_y$ satisfies the Schr\"{o}dinger equation
\be
\mathrm{i} \frac{d U(t)}{d t} = H U(t)
\ee
with $U(0) =I,$ such that the state evolution $|x(t)\rangle = U(t) |x_0\rangle$ satisfies the usual Schr\"{o}dinger equation
\be
\mathrm{i} \frac{d |x(t)\rangle}{d t} = H |x(t)\rangle
\ee
with $|x(0)\rangle = |x_0\rangle \in \mathbb{C}^2 \setminus \{0\}.$

In conclusion, we get a quantum theory on $(\mathbb{C}^2, \|\cdot\|_p)$ with the semi-inner product $[\cdot,\cdot]_p.$
\end{example}

\begin{comment}
\section{Quantum theory in $L_p$ space}\label{LpQT}

We shall construct a quantum theory in $L_p (\mathbb{R}^d)$ for $1 \le p \not=2 < \8,$ where the integer $d \ge 1$ is the dimension of the physical space $\mathbb{R}^d.$ We start with recalling some basic facts on complex Banach space $L_p (\mathbb{R}^d)$ with the norm $\|\cdot\|_p$ defined by
\be
\| f \|_p = \Big ( \int_{\mathbb{R}^d} |f (x)|^p d x\Big )^\frac{1}{p},\quad \forall f \in L_p (\mathbb{R}^d).
\ee
Let $q$ be the conjugate number of $p,$ that is, $q= p/(p-1)$ if $p>1$ or $q= \8$ if $p=1.$ Consider the complex Banach space $L_q (\mathbb{R}^d)$ with the norm $\|\cdot\|_q$ defined by
\be
\| g \|_q = \left \{\begin{split} & \Big ( \int_{\mathbb{R}^d} |g (x)|^q d x\Big )^\frac{1}{q},\quad 1 < q <\8;\\
& \mathrm{essup}_{x \in \mathbb{R}^d} |g(x)|,\quad q=\8.
\end{split}\right.\ee
It is known that $L_q (\mathbb{R}^d)$ is the Banach dual space of $L_p (\mathbb{R}^d):$ For any $g \in L_q (\mathbb{R}^d)$ the mapping $l_g: L_p (\mathbb{R}^d) \mapsto \mathbb{C}$ is well defined by
\be
l_g (f) = \int_{\mathbb{R}^d} f (x) \overline{g(x)} d x,\quad \forall f \in L_p (\mathbb{R}^d),
\ee
and is a bounded linear functional in $L_p (\mathbb{R}^d)$ with the norm $\|l_g\| = \| g\|_q,$ and vice versa.
\end{comment}
\section{Summary}\label{Sum}

Based on Lumer's notion of a state \cite{Lumer1961}, we introduce the notions of physical event and quantity for a quantum system associated with a complex Banach space equipped with a semi-inner product. Using these notions, we present a Banach space formulation of quantum mechanics, including the four postulates: the state postulate, the observable postulate, the measurement postulate, and the evolution postulate. These postulates generalize the conventional Hilbert space formalism of quantum mechanics to Banach space one. We include some examples to illustrate the physical meaning of those notions in the Banach space setting. It may not be unreasonable to hope that this formulation could have heuristic value to quantum physics in practice.

\

%{\it Acknowledgments}\; This work is partially supported by the Natural Science Foundation of China under Grant No.11871468.

\bibliography{apssamp}% Produces the bibliography via BibTeX.

\begin{thebibliography}{**}

%\bibitem{AA1987} Y. Aharonov, J. Anandan, Phase change during a cyclic quantum evolution, {\it Physical Review Letters} {\bf 58} (1987), 1593-1596.

%\bibitem{AS2022} G.O. Alves, E. Sj\"{o}qvist, Time-optimal holonomic quantum computation,  {\it Physical Review A} {\bf 106} (2022), 032406: 1-9.

%\bibitem{Anandan1988} J. Anandan, Non-adiabatic non-Abelian geometric phase, {\it Physical Letters A} {\bf 133} (1988), 171-175.

%\bibitem{Anandan1992} J. Anandan, The geometric phase, {\it Nature} {\bf 360} (1992), 307-313.

%\bibitem{BBC1995} A. Barenco, C.H. Bennett, R. Cleve, D.P. DiVincenzo, N. Margolus, P. Shor, T. Sleator, J.A. Smolin, H. Weinfurter, Elementary gates for quantum computation, {\it Physical Review A} {\bf 52} (1995), 3457-3467.

%\bibitem{BBW2020} K. Beer, D. Bondarenko, T. Farrelly, T.J. Osborne, R. Salzmann, D. Scheiermann, R. Wolf, Training deep quantum neural networks, {\it Nature Communications} {\bf 11} (2020), 808: 1-6.

\bibitem{BBM1999} C.M. Bender, S. Boettcher, P. N. Meisinger, PT-symmetric quantum mechanics, {\it Journal of Mathematical Physics} {\bf 40} (1999), 2201-2229.

\bibitem{BBJ2002} C. M. Bender, D. C. Brody, H. F. Jones, Complex extension of quantum mechanics, {\it Physical Review Letters} {\bf 89} (2002), 270401: 1-4.

%\bibitem{Berry1984} M. V. Berry, Quantal phase factors accompanying adiabatic changes, {\it Proceedings of the Royal Society of London, Series A} {\bf 392} (1984), 45-57.

%\bibitem{BMKNZ2003} A. Bohm, A. Mostafazadeh, H. Koizumi, Q. Niu, J. Zwanziger, {\it The Geometric Phase in Quantum Systems,} Springer-Verlag, Berlin, 2003.

%\bibitem{Bohr1935} N. Bohr, Can quantum-mechanical description of physical reality be considered compelte? {\it Physical Review} {\bf 48} (1935), 696-702.

%\bibitem{Bownik} M. Bownik, The Kadison-Singer problem, arXiv: 1702.04578.

%\bibitem{BDD2002} M.J. Bremner, C.M. Dawson, J.L. Dodd, A. Gilchrist, A.W. Harrow, D. Mortimer, M.A. Nielsen, T.J. Osborne, Practical scheme for quantum computation with any two-qubit entangling gate, {\it Physical Review Letters} {\bf 89} (2002), 247902: 1-3.

\bibitem{Brody2014} D. C. Brody, Biorthogonal quantum mechanics, {\it Journal of Physics A: Mathematical and Theoretical} {\bf 47} (2014), 035305: 1-21.

%\bibitem{CT2017} G. Carleo, M. Troyer, Solving the quantum many-body problem with artificial neural networks, {\it Science} {\bf 355} (2017), 602-606.

%\bibitem{CHSH1969} J. F. Clauser, M. A. Horne, A. Shimony, R. A. Holt, Proposed experiment to test local hidden-variable theories, {\it Physical Review Letters} {\bf 23} (1969), 880-884.

%\bibitem{Chen2020} Z. Chen, Observable-geometric phases and quantum computation, {\it International Journal of Theoretical Physics} {\bf 59} (2020), 1255-1276.

\bibitem{Chen2021} Z. Chen, A mathematical formalism of non-Hermitian quantum mechanics and observable-geometric phases, arXiv:2111.12883.

%\bibitem{CSY2023} Z. Chen, M. Sun, Z. Yin, Operator geometry and an index formula for Floquet topological phases (in Chinese), {\it Scientia Sinica Mathematica} 53 (2023), doi: 10.1360/SSM-2022-016 (1-24).

%\bibitem{CM2011} Z. Chen, A. Montina, Measurement contextuality is implied by macroscopic realism, {\it Physical Review A} {\bf 83} (2011), 042110.

%\bibitem{DeWitt1970} B. S. M. De Witt, Quantum mechanics and Reality, {\it Physics Today} {\bf 23} (1970), 30-35.

%\bibitem{DBE1995} D. Deutsch, A. Barenco, A. Ekert, Universality in quantum computation, {\it Proceedings of the Royal Society of London A} {\bf 449} (1995), 669-677.

\bibitem{DU1977} J. Diestel, J.J. Uhl, Jr., {\it Vector Measures,} The American Mathematical Society, Providence, 1977.

\bibitem{Dirac1958} P. A. M. Dirac, {\it The Principles of Quantum Mechanics} (Fourth Edition), Oxford University Press, London, 1958.

\bibitem{Dunf1958} N. Dunford, A survey of the theory of spectral operators, {\it Bulletin of the American Mathematical Society} {\bf 64} (1958), 217-274.

\bibitem{DS1957} N. Dunford, J.T. Schwartz, {\it Linear Operators Part I: General Theory,} Interscience Publishers, INC., New York, 1957.

\bibitem{DS1963} N. Dunford, J.T. Schwartz, {\it Linear Operators Part II: Spectral Theory, Self Adjoint Operators in Hilbert Space,} Interscience Publishers, New York, 1963.

\bibitem{DS1971} N. Dunford, J.T. Schwartz, {\it Linear Operators Part III: Spectral Operators,} Wiley-Interscience, New York, 1971.

%\bibitem{EPRparadox} A. Einstein, B. Podolsky, N. Rosen, Can quantum-mechanical description of physical reality be considered compelte? {\it Physical Review} {\bf 47} (1935), 777-780.

%\bibitem{Everett1957} H. Everett, III, ``Relative state" formulation of quantum mechanics, {\it Reviews of Modern Physics} {\bf 29} (1957), 454-462.

%\bibitem{GBC2016} I. Goodfellow, Y. Bengio, A. Courville, {Deep Learning,} MIT Press, 2016.

%\bibitem{GT2018} G.M. Graf, C. Tauber, Bulk-edge correspondence for two-dimensional Floquet topological insulators, {\it Ann. Henri Poincar\'{e}} {\bf 19} (2018), 709-741.

%\bibitem{GHZ1989} D. M. Greenberg, M. A. Horne, A. Zeilinger, Going beyond Bell's theorem, in {\it Bell Theorem, Quantum Theory and Conceptions of the Universe,} M. Kafatos, Ed., Kluwer Academic Publisher, Dordrecht, 1989, 69-72.

%\bibitem{Gudder2003} S. Gudder, Quantum computation, {\it The American mathematical monthly} {\bf 110} (2003), 181-201.

%\bibitem{HS2010} L. Hardy, R. Spekkens, Why physics needs quantum foundations, {\it Physics in Canada} {\bf 66} (2010), 73-76.

%\bibitem{HS2016} E. Herterich, E. Sj\"{o}qvist, Single-loop multiple-pulse nonadiabatic holonomic quantum gates, {\it Physical Review A} {\bf 94} (2016), 052310:

%\bibitem{Hooft2014} G. 't Hooft, The cellular automaton interpretation of quantum mechanics, arXiv: 1405.1548.

%\bibitem{Huse1994} D. Husemoller, {\it Fibre Bundles} (Third Edition), Springer-Verlag, New York, 1994.

%\bibitem{Isham1999} C. J. Isham, {\it Morden Differential Geometry for Physicists} (Second Edition), World Scientific, Singapore, 1999.

%\bibitem{JBD2023} S. Jerbi, L.J. Fiderer, H.P. Nautrup, J.M. K¨¹bler, H.J. Briegel, V. Dunjko, Quantum machine learning beyond kernel methods, {\it Nature Communications} {\bf 14} (2023), 517: 1-8.

%\bibitem{KS1959} R. Kadison, I. Singer, Extensions of pure states, {\it American Journal of Mathematics} {\bf 81} (1959), 383-400.

%\bibitem{KS1967} S. Kochen, E. Specker, The problem of hidden variables in quantum mechanics, {\it Journal of Mathematics and Mechanics} {\bf 17} (1967), 59-87.

%\bibitem{Lloyd1995} S. Lloyd, Almost any quantum logic gate is universal, {\it Physical Review Letters} {\bf 75} (1995), 346-349.

\bibitem{Lumer1961} G. Lumer, Semi-inner-product spaces, {\it Transactions of the American Mathematical Society} {\bf 100} (1961), 29-43.

\bibitem{Mac1963} G.W. Mackey, {\it The Mathematical Foundations of Quantum Mechanics,} W.A. Benjamin, INC., New York, 1963.

\bibitem{Mosta2010} A. Mostafazadeh, Pseudo-Hermitian representation of quantum mechanics, {\it International Journal of Geometric Methods in Modern Physics} {\bf 7} (2010), 1191-1306.

\bibitem{vN1955} J. von Neumann, {\it Mathematical Foundations of Quantum Mechanics,} Princeton University Press, Princeton, 1955.

%\bibitem{NDGD2006} M. A. Nielsen, M. R. Dowling, M. Gu, A. C. Doherty, Quantum computation as geometry, {\it Science} {\bf 311} (2006), 1133-1135.

\bibitem{Nie2014} G. Niestegge, A generalized quantum theory, {\it Foundations of Physics} {\bf 44} (2014), 1216-1229.

\bibitem{Nie2015} G. Niestegge, Dynamical correspondence in a generalized quantum theory, {\it Foundations of Physics} {\bf 45} (2015), 525-534.

%\bibitem{PLC2022} I.L. Paiva, R. Lenny, E. Cohen, Geometric phases and the Sagnac effect: Foundational aspects and sensing applications, {\it Advanced Quantum Technologies} {\bf 5} (2022), 2100121: 1-13.

%\bibitem{PDMDB2014} G.D. Paparo, V. Dunjko, A. Makmal, M.A. Martin-Delgado, H.J. Briegel, Quantum speedup for active learning agents, {\it Physical Review X} {\bf 4} (2014), 031002: 1-14.

\bibitem{RS1980I} M. Reed, B. Simon, {\it Method of Mordern Mathematical Physics,} Vol. I, Academic Press, San Diego, 1980.

%\bibitem{RS1980II} M. Reed, B. Simon, {\it Method of Mordern Mathematical Physics,} Vol. II, Academic Press, Cambridge, 1980.

\bibitem{Rudin1991} W. Rudin, {\it Functional Analysis,} Second Edition, The McGraw-Hill Companies, Inc., New York, 1991.

%\bibitem{SB1988} J. Samuel, R. Bhandari, General setting for Berry's phase, {\it Physical Review Letters} {\bf 60} (1988), 2339-2342.

%\bibitem{SchrodingerCat} E. Schr\"{o}dinger, Die gegenw\"{a}rtige situation in der quantenmechanik, {\it Naturwissenschaften} {\bf 23} (1935), 807-812, 823-828, 844-849; English translation by J. D. Trimmer: The present situation in quantum mechanics, {\it Proceeding of the American Philosophical Society} {\bf 124} (1980), 323-338.

%\bibitem{SP2021} M. Schuld, F. Petruccione, {\it Machine Learning with Quantum Computers,} Second Edition, Springer, Switzerland, 2021.

\bibitem{Sch2020} A. Schwarz, Geometric approach to quantum theory, {\it Symmetry, Integrability and Geometry: Methods and Applications} {\bf 16} (2020), 020: 1-3.

\bibitem{Sch2021} A. Schwarz, Geometric and algebraic approaches to quantum theory, {\it Nuclear Physics B} {\bf 973} (2021), 115601: 1-22.

%\bibitem{SS2013} E. Semenov, F. Sukochev, Extreme points of the set of Banach limits, {\it Positivity} {\bf 17} (2013), 163-170.

%\bibitem{SC2022} K. Sharma, M. Cerezo, L. Cincio, P.J. Coles, Trainability of dissipative perceptron-based quantum neural networks, {\it Physical Review Letters} {\bf 128} (2022), 180505: 1-7.

%\bibitem{Simon1983} B. Simon, Holonomy, the quantum adiabatic theorem, and Berry's phase, {\it Physical Review Letters} {\bf 51} (1983), 2167-2170.

%\bibitem{Sj2015} E. Sj\"{o}qvist, Geometric phases in quantum information, {\it International Journal of Quantum Chemistry} {\bf 115} (2015), 1311-1326.

%\bibitem{SMC2016} E. Sj\"{o}qvist, V. Azimi Mousolou, C.M. Canali, Conceptual aspects of geometric quantum computation, {\it Quantum Information Process} {\bf 15} (2016), 3995-4011.

%\bibitem{STAHJS2012} E. Sj\"{o}qvist, D.M. Tong, L.M. Andersson, B. Hessmo, M. Johansson, K. Singh, Non-adiabatic holonomic quantum computation, {\it New Journal of Physics} {\bf 14} (2012), 103035: 1-10.

%\bibitem{Takesaki1979} M. Takesaki, {\it Theory of Operator Algebras I,} Springer, New York, 1979.

%\bibitem{Wigner1959} E. P. Wigner, {\it  Group Theory and its Application to the Quantum Mechanics of Atomic Spectra,} New York: Academic Press, 1959.

%\bibitem{WZ1984} F. Wilczek, A. Zee, Appearance of gauge structure in simple dynamical systems, {\it Physical Review Letters} {\bf 52} (1984), 2111-2114.

%\bibitem{Vara1985} V. S. Varadarajan, {\it Geometry of Quantum Theory} (Second Edition, Springer-Verlag, New York, 1985).

%\bibitem{ZR1999} P. Zanardi, M. Rasetti, Holonomic quantum computation, {\it Physical Letters A} {\bf 264} (1999), 94-99.

%\bibitem{ZV2018} P. Zanardi, L. C. Venuti, Quantum coherence generating power, maximally abelian subalgebras, and Grassmannian geometry, {\it Journal of Mathematical Physics} {\bf 59} (2018), 012203: 1-10.

%\bibitem{ZKFKST2022} J. Zhang, T.H. Kyaw, S. Filipp, L.C. Kwek, E. Sj\"{o}qvist, D. Tong, Geometric and holonomic quantum computation, {\it arXiv}: 2110.03602.

%\bibitem{ZW2002} S.L. Zhu, Z.D. Wang, Implementation of universal quantum gates based on nonadiabatic geometric phases, {\it Physical Review Letters} {\bf 89} (2002), 097902: 1-4.

%\bibitem{ZW2003} S.L. Zhu, Z.D. Wang, Universal quantum gates based on a pair of orthogonal cyclic states: application to NMR systems, {\it Physical Review A} {\bf 67} (2003), 022319: 1-9.

\end{thebibliography}

\end{document}